\documentclass[journal] {IEEEtran}

\makeatletter

\newcommand{\Rmnum}[1]{\expandafter\@slowromancap\romannumeral #1@}
\makeatother

\newtheorem{lemma}{Lemma}

\usepackage{booktabs}
\usepackage{diagbox}
\usepackage{multirow}
\usepackage{makecell}
\usepackage{longtable}
\usepackage{stfloats}
\usepackage{dsfont}
\usepackage{cite}
\usepackage{graphicx}
\usepackage{subfigure}
\usepackage[cmex10]{amsmath}
\usepackage{float}
\usepackage{array}
\usepackage{algorithm}
\usepackage{algpseudocode}
\usepackage{amsmath}
\usepackage{amsfonts}
\usepackage{amssymb}
\usepackage{ulem}
\usepackage{cancel}
\usepackage{amsmath,amssymb,amsfonts}
\usepackage{paralist,bbding,pifont}
\usepackage{subfigure}


\newtheorem{proof}{Proof}

\usepackage{xcolor}

\begin{document}
\title{Full-Duplex ISAC-Enabled D2D Underlaid Cellular Networks: Joint Transceiver Beamforming and Power Allocation}

\author{Tao Jiang,~Ming~Jin,~Qinghua~Guo, \IEEEmembership{Senior Member, IEEE}, Yinhong Liu, and  Yaming Li
\thanks{This work was partially supported by
Natural Science Foundation of China under Grant 61871246,
and Zhejiang Provincial Natural Science Funds under Grant LR21F010001.}
\thanks{T. Jiang, M. Jin, Y. Liu Y. Li are with the Faculty of Electrical Engineering and Computer Science, Ningbo University, Ningbo 315211, China (e-mail: 2011082045@nbu.edu.cn; jinming@nbu.edu.cn; 2211100052@nbu.edu.cn; 2211100133@nbu.edu.cn).}
\thanks{Q. Guo is with the School of Electrical, Computer and Telecommunications Engineering, University of Wollongong, Wollongong, NSW 2522, Australia (e-mail: qguo@uow.edu.au).}
\thanks{This work has been submitted to IEEE Transactions on Wireless Communications on 7 June.}}
\maketitle
\begin{abstract}
Integrating device-to-device (D2D) communication into
cellular networks can significantly reduce the transmission burden on base stations (BSs).
Besides, integrated sensing and communication (ISAC) is envisioned as a key feature in future wireless networks.
In this work, we consider a full-duplex ISAC-based D2D underlaid system, and propose a joint beamforming and power allocation scheme to improve the performance of the coexisting ISAC and D2D networks.
To enhance spectral efficiency, a sum rate maximization problem is formulated for the full-duplex ISAC-based D2D underlaid system, which is non-convex.
To solve the non-convex optimization problem, we propose a successive convex approximation (SCA)-based iterative algorithm and prove its convergence.
Numerical results are provided to validate the effectiveness of the proposed scheme with the iterative algorithm, demonstrating that the proposed scheme outperforms state-of-the-art ones in both communication and sensing performance.

 \end{abstract}
 \begin{IEEEkeywords}
 Integrated sensing and communication (ISAC), D2D underlaid cellular network, transmit/receive beamforming, power allocation.
 \end{IEEEkeywords}
\section{Introduction}
\IEEEPARstart{W}ITH the explosive growth of the Internet of Things (IoT) devices, spectrum resource congestion is becoming increasingly prominent. Besides, many emerging applications in the sixth generation (6G), such as intelligent transportation and human-computer interaction, demand both high-quality communications and high-precision sensing simultaneously.
These practical needs motivate the design of integrated sensing and communication (ISAC) systems, which has attracted extensive attention from both industry and academia \cite{Liu2022}.
Moreover, advancements in technologies such as multiple-input multiple-output (MIMO) \cite{Fang2023},
signal processing technologies for ISAC \cite{Zhang2021} and hardware design
enable sensing and communication functionalities to be implemented using the same hardware and share same spectrum resource to achieve reliable communications and sensing, which could effectively improve spectral efficiency and energy efficiency.

In a MIMO ISAC system, beamforming plays a key role, where the spatial degrees of freedom (DoFs) provided by MIMO can be exploited to achieve substantial gains in both sensing and communications.
Many existing works focus on the design of transmit beamforming \cite{He2022,Wang2022,Li2023,Hua2023,Liu22022,LiuX2020,Ren2023},
with optimization problems centered around two key metrics for sensing:
1) Beampattern gains at the direction of targets; 2) Beampattern template matching.
Under the first metric, He {\it et al} in \cite{He2022} optimized the transmit beampattern by formulating an energy efficiency maximization problem while ensuring the beampattern gains of targets and the SINR requirements of communication users.
The transmit beampattern design was also considered in NOMA-ISAC systems \cite{Wang2022}, RIS-aided ISAC systems \cite{Li2023}, and secure ISAC systems \cite{Hua2023}.
In \cite{Wang2022}, a weighted objective function,  which involves the beampattern gain at a target and the SINR at communication users, is maximized.
In \cite{Li2023}, a reconfigurable intelligent surface (RIS) is employed to enhance the beampattern gain at the target direction by establishing an extra reflective link.
In a secure ISAC system in \cite{Hua2023}, the beampattern gain at the target (eavesdropper) is maximized while guaranteeing the secret rate and communication rate of legitimate users.
Under the second metric, in \cite{Liu22022}, the matching error between the designed and desired beampattern is minimized, incorporating the outage probability and power consumption constraints.
Based on \cite{Liu22022}, in \cite{LiuX2020} and \cite{Ren2023}, the authors considered a weighted minimization problem of matching error and cross-correlation among beampatterns of multiple targets.
However, these works focus on transmit beampatterns and ignore the reception of radar echo, which fails to capture the property of the receive signal at the BS.

In addition to transmit beamforming design, receive beamforming has been jointly used to filter the received signal and improve the signal to interference and noise ratio (SINR) for sensing at the base station (BS) \cite{Ashraf2023,LiuZ2023,He2023,Chu2023}.
The sensing SNR/SINR maximization problems were formulated in \cite{Ashraf2023},  corresponding to the perfect clutter removal and imperfect clutter removal scenarios, respectively.
In \cite{He2023}, the sensing SINR for target detection is used as a constraint in optimizing the power efficiency and spectral efficiency in a full-duplex ISAC system.
To suppress self-interference (SI) in a full-duplex ISAC system,
a joint ISAC transceiver beamforming design scheme was
proposed for maximizing a weighted objective function containing communication rate, transmit gain and receive gain \cite{LiuZ2023}.
Consistent with \cite{He2023}, the minimal sensing SINR was utilized as a constraint for minimizing the maximum eavesdrop rate of eavesdroppers in a secure DFRC system \cite{Chu2023}.

Recently, device-to-device (D2D) communication has attracted much attention.
It enables two nearby transceivers to establish a strong direct communication link without extra interventions of central agents, effectively reducing transmission delays, improving the spectral efficiency, and reducing power consumption simultaneously \cite{Lee2021}.
However, the interference among multiple D2D pairs are inevitable when they exploit the same spectral resource.
Power allocation plays a key role in managing interference and ensuring the quality of service (QoS) in D2D networks.
Weighted minimum mean squared error (WMMSE) is a classic iterative power allocation algorithm used for solving the weighted sum rate maximization problem in a centralized manner \cite{Christensen2008,Shi2011}.
However, WMMSE exhibits high complexity.
To tackle this issue, machine learning (ML) methods were exploited to reduce the computational complexity \cite{Chen2022,Gu2023}.
Combining the topology information, graph neural network (GNN) was adopted to implement power allocation among D2D pairs\cite{Chen2022}.
By modeling channel gains as the edge and node features, each D2D pair integrates neighbouring embeddings, and then update their own embedding in each GNN layer.
The power allocation result is obtained by classifying embedding, however it still operates in a centralized manner.
In \cite{Gu2023}, an improved GNN-based distributed power allocation scheme was proposed to reduce the CSI signalling overhead by broadcasting the pilot signal power and exploiting the temporal correlation of CSI.
However, the works mentioned above \cite{Christensen2008,Shi2011,Chen2022,Gu2023} only considered homogeneous D2D networks.

Actually, D2D pairs often communicate underlaying cellular networks,
which effectively alleviates the transmission burden of the central base station (BS) \cite{LiL2023,Vishnoi2023}.
Thus, power allocation is crucial for ensuring QoS for both cellular users (CUs) and D2D pairs.
Some studies, e.g., \cite{Xiao2020,Yang2016,Du2023}, consider a single-antenna BS and aim to maximize the energy efficiency \cite{Xiao2020} or spectral efficiency \cite{Yang2016}, and minimize the power consumption \cite{Du2023}, while ensuring the minimum communication SINR requirements.
Some works focus on multiple-antennas BS and leverage beamforming to enhance performance \cite{Cheng2022, Lin2016}.
In \cite{Cheng2022}, a robust beamforming scheme was proposed with given power for the D2D transmitters (D2D-TX).
In \cite{Lin2016}, the performance of transmit beamforming and power allocation were comprehensively analyzed in the coexistence of D2D  and downlink CU communications.
However, only a single D2D pair was considered.
Multiple CU downlink and D2D communications was discussed in \cite{WangW2022}, where an iterative algorithm is employed to solve the spectral efficiency maximization problem.
For the coexistence of multiple CU uplink and-D2D communications, a general communication and interference model was presented in \cite{Ozbek2020} to improve energy efficiency.
However, the works in \cite{Xiao2020,Yang2016,Du2023,Cheng2022,Lin2016,WangW2022,Ozbek2020}  consider a half-duplex BS.
In \cite{ChenC2020}, a full-duplex BS is adopted in a D2D underlaid cellular network.
However, it focuses on the issue of sub-channel allocation  and does not consider the issue of power allocation.

\subsection{Motivations and Contributions}\label{A}
ISAC and D2D underlaid cellular networks have generally progressed along separate, parallel paths despite extensive individual studies.
This work proposes integration of the full-duplex ISAC and the D2D underlaid cellular networks to improve the spectral efficiency,
and investigates the joint optimization of the the transceiver beamformings of the ISAC BS and the transmit power allocation of both the BS and D2D-TXs.
The main contributions of this paper are summarized as follows:
\begin{compactitem}
\item We propose a full-duplex ISAC and D2D underlaid cellular network, which involves a full-duplex BS, multiple D2D pairs and multiple CUs. To maintain the coexistence of the ISAC and D2D networks, we formulate a sum rate maximization problem while guaranteeing minimum sensing SINR for target detection, maximum transmit power  of D2D-TXs and the full-duplex BS.
\end{compactitem}
\begin{compactitem}
\item To deal with the non-convexity of the formulated problem, we employ the successive convex approximation (SCA) technology to covert the original problem into  a convex problem.
    The convergence and the computational complexity of the SCA-based algorithm are analyzed.
\end{compactitem}
\begin{compactitem}
\item Numerical results are provided to validate the effectiveness of the proposed scheme. Compared with the existing schemes, the propose scheme delivers the best performance in both communication and radar sensing.
\end{compactitem}

The remainder of the paper is organized as follows.
Section $\mathrm{\uppercase\expandafter{\romannumeral2}}$ presents the system models and provides the problem formulation under the criterion of the sum rate maximization.
In Section $\mathrm{\uppercase\expandafter{\romannumeral3}}$,
an SCA-based optimization algorithm is proposed to solve the formulated problem,
and the convergence and the computational complexity of the algorithm are analyzed.
Numerical simulations are provided in Section $\mathrm{\uppercase\expandafter{\romannumeral4}}$
and conclusions are drawn in Section $\mathrm{\uppercase\expandafter{\romannumeral5}}$.

  \textit{Notation:} Throughout the paper, boldface upper-case letters refer to matrices and boldface lower-case letters refer to vectors. $(\cdot)^H$, $(\cdot)^\mathrm{T}$, $(\cdot)^{-1}$, $\mathrm{Tr}(\cdot)$, $\mathbb{E}(\cdot)$ represent the conjugate transpose, transpose, inverse, trace and expectation operator, respectively. For a square matrix $\mathbf{D}$, $\mathbf{D}\succeq \mathbf{0}$ denotes that it is positive semidefinite, $\|\cdot\|$ denotes the Euclidean norm of a complex vector and $|\cdot|$ denotes the Euclidean norm of a complex number. $\mathcal{CN}(\mu,\sigma^2)$ denotes a complex Gaussian distribution with mean $\mu$ and variance $\sigma^2$.

\section{System Model and Problem Formulation}
\label{sec2}
\begin{figure}[!t]
\centerline{\includegraphics[width=2.8in]{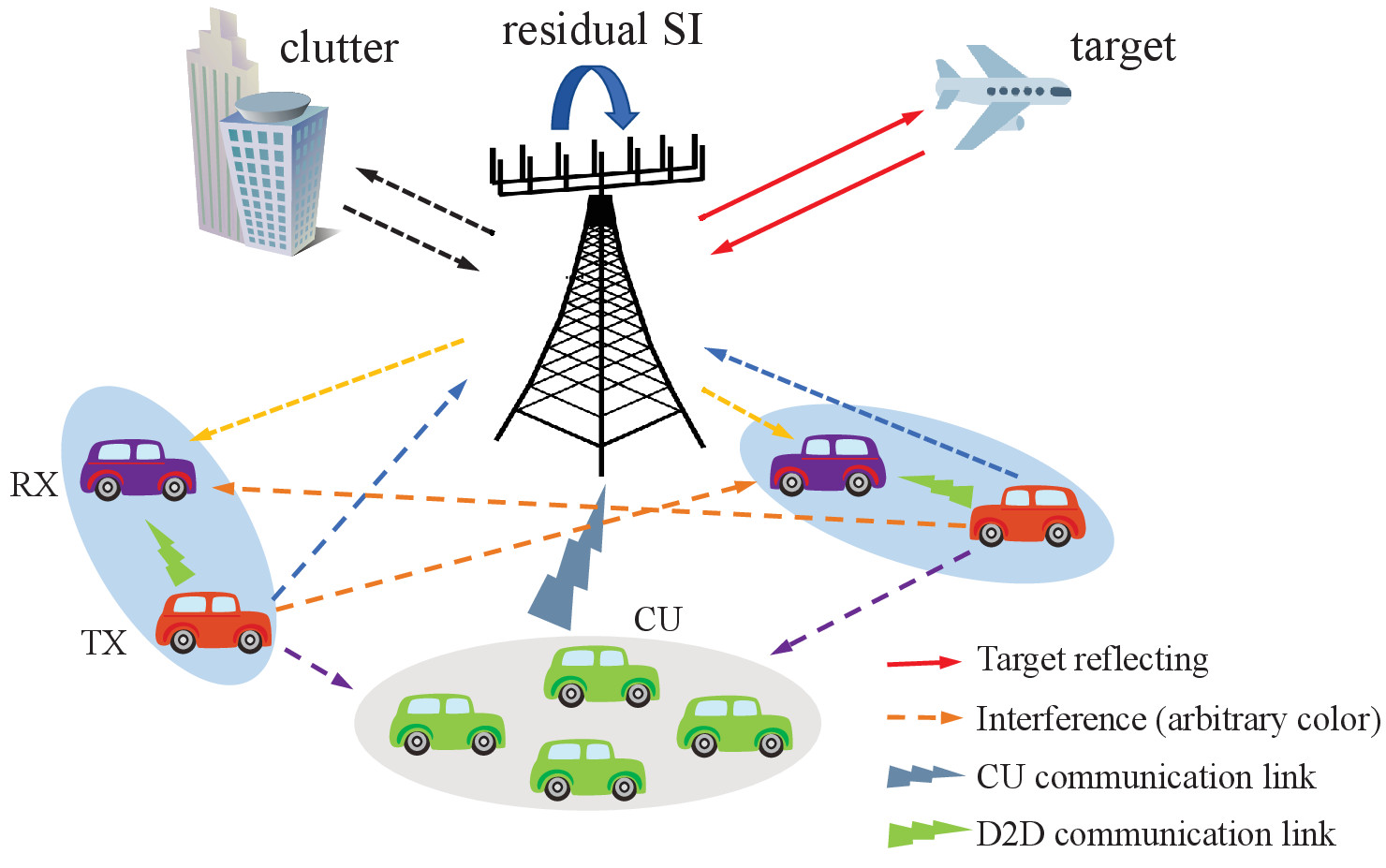}}
\caption{Illustration of an ISAC-empowered D2D underlaid cellular network.}
\label{model}
\end{figure}
We consider a joint system including full-duplex BS downlink communication, D2D communication, and radar sensing.
Specifically, as shown in Fig. \ref{model}, we assume an ISAC BS
equipped with $N_{t}$ transmit antennas and $N_{r}$ receive antennas.
Suppose that $K$ single-antenna CUs are served by the BS, denoted by  $\mathcal{K} = \{\mathrm{U}_{1},\dots,\mathrm{U}_{k}, \dots, \mathrm{U}_{K}\}$, and there are $M$ single-antenna D2D communication pairs
$\mathcal{M}=\left\{\left\{\text{D2D-TX}_{m},\text{D2D-RX}_{m}\right\}_{m=1}^{M}\right\}$, where D2D-TX$_{m}$ and D2D-RX$_{m}$ represent the transmitter and receiver of the $m$th D2D pair, respectively. Additionally, a target is at the angle of $\theta_{0}$, and $I$ potential signal-dependent clutters are at the angles of $\left\{\theta_{i}\right\}_{i=1}^{I},$ which reflect the signals from undesired directions and interfere with radar sensing.

\subsection{Communication Model}
\subsubsection{Downlink communication model}
The full-duplex BS transmits the ISAC signal $\mathbf{x}\in \mathbb{C}^{N_{t}\times 1}$ for downlink communication and radar sensing simultaneously, and it is given by
 \begin{equation}\label{eq01}
\mathbf{x} = \sum_{k=1}^{K} \mathbf{w}_{k}s_{k}^\text{CU}
\end{equation}
where $s_{k}^\text{CU}$ denotes the desired information symbol for the $k$th CU
and $\mathbf{w}_{k}\in \mathbb{C}^{N_{t}\times 1}$ denotes the corresponding beamforming vector.
Without loss of generality, it is assumed that information symbols are independent with each other and have unit power.

The received signal at the $k$th CU can be written as
\begin{equation}\label{eq02}
\begin{split}
 y_k&=\underbrace{\mathbf{h}_{\text{BS},  k}^H \mathbf{w}_k s_k^\text{CU}}_{\text{Desired~communication~signal}}\\
 &~~~~+\underbrace{\mathbf{h}_{\text{BS},k}^H \sum_{\substack{\forall k^{\prime} \neq k}} \mathbf{w}_{k^{\prime}} s_{k^{\prime}}^\text{CU}+\sum_{m=1}^M \sqrt{p_m} h_{m, k}^{\text{TX}} s_{m}^\text{D2D}+n_k}_{\text{Interference+Noise}}
 \end{split}
 \end{equation}
where $\mathbf{h}_{\text{BS}, k}\in \mathbb{C}^{N_{t}\times 1}$ and $h_{m, k}^{\text{TX}}$ represent the channel coefficients from the BS and D2D-TX$_{m}$ to $\mathrm{U}_{k}$, respectively,
$s_{m}^\text{D2D}$ and $p_{m}$ denote the information symbol and the corresponding transmit power of D2D-TX$_{m}$,
and $n_{k}$ denotes the additive circular symmetric complex white Gaussian noise with mean zero and variance $\sigma_{k}^{2}$, i.e., $n_{k}\sim\mathcal{CN}(0,\sigma_{k}^{2})$.
The corresponding SINR at the $k$th CU can be expressed as
\begin{equation}\label{eq03}
\begin{split}
 &\mathrm{SINR}_{k}^\text{CU}
=\frac{\left|\mathbf{h}_{\text{BS}, k}^H \mathbf{w}_k\right|^{2}}{\sum\limits_{\forall k^{\prime} \neq k}
\left|\mathbf{h}_{\text{BS}, k}^H \mathbf{w}_{k^{\prime}}\right|^{2}+\sum\limits_{m=1}^M p_{m} \left|h_{m, k}^{\mathrm{TX}} \right|^{2}+\sigma_{k}^{2}}.
 \end{split}
 \end{equation}
Thus, its achievable communication rate is given by
\begin{equation}\label{eq04}
\begin{split}
 C_{k}^\text{CU} = \mathrm{log}_{2}\left(1+\mathrm{SINR}_k^\text{CU}\right).
 \end{split}
 \end{equation}

\subsubsection{D2D communication model} The received signal at D2D-RX$_{m}$ contains the desired one from the corresponding transmitter D2D-TX$_{m}$, and the unexpected interferences from other D2D transmitters and the BS. Therefore, the overall received signal at D2D-RX$_{m}$ can be written as
\begin{equation}\label{eq05}
\begin{split}
 y_m&=\underbrace{\sqrt{p_{m}}h_{m, m}^{\text{TX}}s_{m}^\text{D2D}}_{\text{Desired~ communication~signal}}\\
 &~~+\underbrace{\sum_{\forall m^{\prime} \neq m}\sqrt{p_{m^{\prime}}}h_{m^{\prime}, m}^{\text{TX}}s^\text{D2D}_{m^{\prime}}+\mathbf{h}_{\text{BS}, m}^H \sum_{k=1}^K \mathbf{w}_{k} s_{k}^\text{CU}+n_{m}}_{\text{Interference+Noise}}
 \end{split}
 \end{equation}
where  $h_{m^{\prime}, m}^{\text{TX}}$ represents the channel coefficient from D2D-TX$_{m'}$ to D2D-RX$_{m}$,
$\mathbf{h}_{\text{BS}, m}\in \mathbb{C}^{N_{t}\times 1}$ denotes the channel coefficient between BS and D2D-RX$_{m}$, and $n_{m}\sim\mathcal{CN}(0,\sigma_{m}^{2})$ denotes the additive white Gaussian noise. Accordingly, the SINR at D2D-RX$_{m}$ is given by
\begin{equation}\label{eq06}
\begin{split}
 &\mathrm{SINR}_m^\text{D2D}=\frac{p_m\left|h_{m,m}^{\text{TX}}\right|^2}{\sum\limits_{\forall m^{\prime}\neq m}p_{m^{\prime}}\left|h_{m^{\prime}, m}^{\text{TX}}\right|^2+\sum\limits_{k=1}^K\left|\mathbf{h}_{\text{BS}, m}^H\mathbf{w}_k\right|^2+\sigma_{m}^2}.
 \end{split}
 \end{equation}
Accordingly, the achievable communication rate of the $m$th D2D pair is given by
\begin{equation}\label{eq07}
\begin{split}
 C_{m}^\text{D2D} = \log_{2}\left(1+\mathrm{SINR}_m^\text{D2D}\right).
 \end{split}
 \end{equation}
 \subsection{Radar Sensing Model}
It is assumed that the transmit and receive antennas at the BS are uniform linear arrays (ULAs) with half-wavelength inter-element spacing \cite{Wen2023}. The steering vector of transmit array $\mathbf{a}_t(\theta)$ and receive array $\mathbf{a}_r(\theta)$ are given by $\mathbf{a}_t(\theta)=\frac{1}{\sqrt{N_{t}}}\left[1,e^{j\pi\sin(\theta)},\ldots,e^{j\pi(N_{t}-1)\sin(\theta)}\right]^\mathrm{T}\in \mathbb{C}^{N_{t}\times 1}$ and  $\mathbf{a}_r(\theta)=\frac{1}{\sqrt{N_{r}}}\left[1,e^{j\pi\sin(\theta)},\ldots,e^{j\pi(N_{r}-1)\sin(\theta)}\right]^\mathrm{T}\in \mathbb{C}^{N_{r}\times 1}$, respectively.
In addition, we assume that a line-of-sight (LOS) channel exists between the BS and the target.
Thus, the reflected signal from the target at the angle of $\theta_{0}$ is given by $\alpha_{0}\mathbf{a}_{r}\left(\theta_{0}\right)\mathbf{a}_{t}^{H}\left(\theta_{0}\right)\mathbf{x}$,
where $\alpha_{0}$ denotes the complex amplitude of the target.
Similarly, the interference from clutters at angles $\{\theta_{i}\}_{i=1}^{I}$ is given by $\sum_{i=1}^{I}\alpha_{i}\mathbf{a}_{r}\left(\theta_{i}\right)\mathbf{a}_{t}^{H}\left(\theta_{i}\right)\mathbf{x}$,
where $\{\alpha_{i}\}_{i=1}^{I}$ denote the complex amplitudes of the clutters.
The residual SI is $\mathbf{H}_{\mathrm{SI}}\mathbf{x}$,
where $\mathbf{H}_{\mathrm{SI}}\in \mathbb{C}^{N_r\times N_t}$ denotes the residual SI channel between the transmit and receive antennas at the BS. Integrating the interference from  D2D-TXs, the received signal  at the BS is given by
 \begin{equation}\label{eq08}
\begin{split}
 \mathbf{y}_{r}&=\underbrace{\alpha_{0}\mathbf{a}_{r}\left(\theta_{0}\right)\mathbf{a}_{t}^{H}\left(\theta_{0}\right)
 \mathbf{x}}_{\text{Target~ reflecting}}+\underbrace{\sum_{i=1}^{I}\alpha_{i}\mathbf{a}_{r}\left(\theta_{i}\right)\mathbf{a}_{t}^{H}\left(\theta_{i}\right)\mathbf{x}}_{\text{Clutter~ reflecting}}\\
 &~~~~~~+\underbrace{\sum_{m=1}^{M}\sqrt{p_{m}}\mathbf{h}_{m, \text{BS}}^{\text{TX}}s_{m}^\text{D2D}}_{\text{Interference~from~D2D}\small{\text{-}}\mathrm{TXs}}+\underbrace{\mathbf{H}_{\mathrm{SI}}\mathbf{x}}_{\text{SI}}+\underbrace{\mathbf{n}_{r}}_{\text{Noise}}
 \end{split}
 \end{equation}
where $\mathbf{h}_{m, \text{BS}}^{\text{TX}}$ denotes the channel coefficient between D2D-TX$_{m}$ and BS,
$\mathbf{n}_{r}\sim \mathcal{CN}(0,\sigma_{r}^{2}\mathbf{I}_{N_{r}})$ denotes the additive white Gaussian noise,
and $\mathbf{I}_{N_{r}}\in \mathbb{R}^{N_{r}\times N_{r}}$ is the identity matrix.
With a receive beamformer $\mathbf{u}\in \mathbb{C}^{N_{r}\times 1}$  at the BS, we have
\begin{equation}\label{eq09}
\begin{split}
 y_{r}& =\mathbf{u}^{H}\mathbf{y}_{r}\\
 &=\alpha_{0}\mathbf{u}^{H}\mathbf{a}_{r}\left(\theta_{0}\right)\mathbf{a}_{t}^{H}\left(\theta_{0}\right)
 \mathbf{x}+\mathbf{u}^{H}\sum_{i=1}^{I}\alpha_{i}\mathbf{a}_{r}\left(\theta_{i}\right)\mathbf{a}_{t}^{H}\left(\theta_{i}\right)\mathbf{x}\\
 &~~~~~~+\mathbf{u}^{H}\sum_{m=1}^{M}\sqrt{p_{m}}\mathbf{h}_{m, \text{BS}}^{\text{TX}}s_{m}+\mathbf{u}^{H}\mathbf{H}_{\text{SI}}\mathbf{x}+\mathbf{u}^{H}\mathbf{n}_{r}.
\end{split}
\end{equation}
Denote $\alpha_{i}\mathbf{a}_{r}\left(\theta_{i}\right)\mathbf{a}_{t}^{H}\left(\theta_{i}\right)$
as $\mathbf{A}(\alpha_{i};\theta_{i})$, and let $\mathbf{B}=\sum_{i=1}^{I}\mathbf{A}(\alpha_{i};\theta_{i})+\mathbf{H}_\mathrm{SI}$. According to \cite{He2023}, the SINR of radar sensing is given by
 \begin{small}
 \begin{equation}\label{eq10}
 \begin{split}
&\mathrm{SINR}_{r}\\
&=\frac{\mathbb{E}\{|\mathbf{u}^{H}\mathbf{A}(\alpha_{0};\theta_{0})\mathbf{x}|^{2}\}}{\mathbb{E}\{|\mathbf{u}^{H}\mathbf{B}\mathbf{x}|^{2}\}+\sum\limits_{m=1}^{M}\mathbb{E}\{|\sqrt{p_{m}}\mathbf{u}^{H}\mathbf{h}_{m, \text{BS}}^{\text{TX}}s_{m}|^{2}\}+\mathbb{E}\{|\mathbf{u}^{H}\mathbf{n}_{r}|^{2}\}}. \\
\end{split}
 \end{equation}
 \end{small}By letting
\begin{equation}
  \mathbf{F}=\mathbb{E}\left\{\mathbf{xx}^{H}\right\}=\sum_{k=1}^{K}\mathbf{w}_{k}\mathbf{w}_{k}^{H},
\end{equation}
$\mathrm{SINR}_{r}$ can be rewritten as
  \begin{small}
 \begin{equation}\label{eq11}
 \begin{split}
 \mathrm{SINR}_{r}=\frac{\mathbf{u}^{H}\mathbf{A}(\alpha_{0};\theta_{0})\mathbf{F}\mathbf{A}^{H}(\alpha_{0};\theta_{0})\mathbf{u}}{\mathbf{u}^{H}\left(\mathbf{B}\mathbf{F}\mathbf{B}^{H}+\sum\limits_{m=1}^{M}p_{m}\mathbf{h}_{m, \text{BS}}^{\text{TX}}(\mathbf{h}_{m, \text{BS}}^{\text{TX}})^H+\sigma_{r}^{2}\mathbf{I}_{N_{r}}\right)\mathbf{u}}.
 \end{split}
 \end{equation}
 \end{small}

\subsection{Problem Formulation}
An optimization problem  is  formulated to maximize the sum rate of CU and D2D by jointly optimizing the transmit beamforming vectors $\{\mathbf{w}_{k}\}_{k=1}^{K}$ at the BS, the receive beamforming vector $\mathbf{u}$, and the power of D2D-TXs $\{p_{m}\}_{m=1}^{M}$,
considering the constrains on power consumption at the BS and D2D transmitters and the SINR requirement on target detection.

Use $P_{\mathrm{BS}}$ and $P_{m}$ to denote the maximum transmit power of the BS and D2D-TX$_{m}$, respectively.
Accordingly, we can formulate the sum rate maximization problem  as
\begin{subequations}\label{eq13}
\begin{align}
\underset{\mathbf{w}_{k},\mathbf{u},p_{m}}
{\text{max}}&\sum_{k=1}^{K}C_{k}^\text{CU}+\sum_{m=1}^{M}C_{m}^\text{D2D}\label{13a}\\
\text{s.t.}~~~~&~\mathrm{SINR}_{r}\geq \gamma_{r} \label{13b}\\
&\sum_{k=1}^{K}\|\mathbf{w}_{k}\|^{2}\leq P_{\mathrm{BS}}\label{13c}\\
&0\leq p_{m}\leq P_{m}, \forall m.\label{13d}
\end{align}
\end{subequations}
The global optimal solution to (\ref{eq13}) is intractable due to the following reasons: 1) The transmit beamformers $\{\mathbf{w}_{k}\}_{k=1}^{K}$, the transmit power of $\text{D2D-TX}_m$, i.e., $\{p_m\}_{m=1}^{M}$, and receive beamformer $\mathbf{u}$ are coupled in the constraint (\ref{13b}); 2) The objective function (\ref{13a}) is non-convex. Given this, solving (\ref{eq13}) is a challenging issue.

\section{Proposed Method}
\subsection{SCA-based Optimization}
\subsubsection{Optimization of $\mathbf{u}$}
Through examining the problem (\ref{eq13}), the receive beamforming vector $\mathbf{u}$ only exists in  radar SINR constraint (\ref{13b}). Therefore, it is reasonable to maximize the radar SINR concerning $\mathbf{u}$ to provide more freedom for optimizing other variables. The radar SINR maximization problem can be formulated as
\begin{equation}\label{eq14}
\begin{aligned}
\underset{\mathbf{u}}
{\max}~~\frac{\mathbf{u}^{H}\mathbf{A}(\alpha_{0};\theta_{0})\mathbf{F}\mathbf{A}^{H}(\alpha_{0};\theta_{0})\mathbf{u}}{\mathbf{u}^{H}\mathbf{G}\mathbf{u}}\\
\end{aligned}
\end{equation}
where
\begin{equation}\label{eq15}
  \mathbf{G} = \mathbf{B}\mathbf{F}\mathbf{B}^{H}+\sum_{m=1}^{M}p_{m}\mathbf{h}_{m, \text{BS}}^{\text{TX}}(\mathbf{h}_{m, \text{BS}}^{\text{TX}})^{H}+\sigma_{r}^{2}\mathbf{I}_{N_{r}}.
\end{equation}

\begin{lemma}
The close-form expression of the optimal $\mathbf{u}$ is given by
 \begin{equation}\label{eq16}
\begin{aligned}
\mathbf{u}^{*} = \frac{\mathbf{G}^{-1}\mathbf{a}_{r}\left(\theta_{0}\right)}{\mathbf{a}_{r}^{H}\left(\theta_{0}\right)\mathbf{G}^{-1}\mathbf{a}_{r}\left(\theta_{0}\right)}.
\end{aligned}
\end{equation}
\end{lemma}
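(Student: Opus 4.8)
The plan is to recognize (\ref{eq14}) as a generalized Rayleigh quotient and to exploit the rank-one structure of its numerator rather than resorting to a full eigen-decomposition. First I would observe that $\mathbf{G}$ in (\ref{eq15}) is Hermitian and, thanks to the term $\sigma_{r}^{2}\mathbf{I}_{N_{r}}$, strictly positive definite; hence it is invertible and admits a Hermitian positive-definite square root $\mathbf{G}^{1/2}$. This guarantees that $\mathbf{u}^{H}\mathbf{G}\mathbf{u}>0$ for every $\mathbf{u}\neq\mathbf{0}$, so the objective is well defined and invariant under the rescaling $\mathbf{u}\mapsto c\,\mathbf{u}$ for any nonzero scalar $c$.

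Next I would simplify the numerator using $\mathbf{A}(\alpha_{0};\theta_{0})=\alpha_{0}\mathbf{a}_{r}(\theta_{0})\mathbf{a}_{t}^{H}(\theta_{0})$. The factor $\mathbf{a}_{t}^{H}(\theta_{0})\mathbf{F}\mathbf{a}_{t}(\theta_{0})$ is a nonnegative scalar (since $\mathbf{F}\succeq\mathbf{0}$), so the numerator matrix collapses to the rank-one form $|\alpha_{0}|^{2}\bigl(\mathbf{a}_{t}^{H}(\theta_{0})\mathbf{F}\mathbf{a}_{t}(\theta_{0})\bigr)\,\mathbf{a}_{r}(\theta_{0})\mathbf{a}_{r}^{H}(\theta_{0})$. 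Consequently the objective in (\ref{eq14}) reduces, up to the positive constant $|\alpha_{0}|^{2}\mathbf{a}_{t}^{H}(\theta_{0})\mathbf{F}\mathbf{a}_{t}(\theta_{0})$, to maximizing $\tfrac{|\mathbf{u}^{H}\mathbf{a}_{r}(\theta_{0})|^{2}}{\mathbf{u}^{H}\mathbf{G}\mathbf{u}}$.

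Then I would whiten the denominator through the change of variable $\mathbf{v}=\mathbf{G}^{1/2}\mathbf{u}$, so that $\mathbf{u}^{H}\mathbf{G}\mathbf{u}=\|\mathbf{v}\|^{2}$ and $\mathbf{u}^{H}\mathbf{a}_{r}(\theta_{0})=\mathbf{v}^{H}\mathbf{G}^{-1/2}\mathbf{a}_{r}(\theta_{0})$. The Cauchy--Schwarz inequality then gives $|\mathbf{v}^{H}\mathbf{G}^{-1/2}\mathbf{a}_{r}(\theta_{0})|^{2}\le\|\mathbf{v}\|^{2}\,\mathbf{a}_{r}^{H}(\theta_{0})\mathbf{G}^{-1}\mathbf{a}_{r}(\theta_{0})$, with equality if and only if $\mathbf{v}\propto\mathbf{G}^{-1/2}\mathbf{a}_{r}(\theta_{0})$, i.e. $\mathbf{u}\propto\mathbf{G}^{-1}\mathbf{a}_{r}(\theta_{0})$. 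By the scale invariance noted above, any such multiple is optimal, and selecting the normalization in (\ref{eq16}), which enforces $\mathbf{u}^{*H}\mathbf{a}_{r}(\theta_{0})=1$, yields the claimed closed form. Equivalently, one may note that $\mathbf{G}^{-1}$ times the rank-one numerator has a single nonzero eigenvalue whose eigenvector is $\mathbf{G}^{-1}\mathbf{a}_{r}(\theta_{0})$, recovering the same maximizer from the generalized-eigenvalue viewpoint.

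I do not anticipate a serious obstacle. The only points requiring care are verifying $\mathbf{G}\succ\mathbf{0}$ so that $\mathbf{G}^{-1/2}$ exists, and handling the degenerate case $\mathbf{a}_{t}^{H}(\theta_{0})\mathbf{F}\mathbf{a}_{t}(\theta_{0})=0$, in which the numerator vanishes identically and the objective is zero for every $\mathbf{u}$, so the particular choice of $\mathbf{u}^{*}$ is immaterial. The main conceptual step is spotting that the numerator is rank one, which converts the generalized Rayleigh quotient into a direct Cauchy--Schwarz problem instead of demanding a general eigen-analysis.
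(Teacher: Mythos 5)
Your proof is correct, and it shares the paper's first reduction but diverges on the key optimality argument. Like the paper, you collapse the rank-one numerator so that the objective becomes $|\mathbf{u}^{H}\mathbf{a}_{r}(\theta_{0})|^{2}/(\mathbf{u}^{H}\mathbf{G}\mathbf{u})$ up to a positive constant, and you invoke scale invariance of the Rayleigh-type quotient. From there the paper takes the classical MVDR route: it imposes the normalization $\mathbf{u}^{H}\mathbf{a}_{r}(\theta_{0})=1$ as a constraint, minimizes $\mathbf{u}^{H}\mathbf{G}\mathbf{u}$, forms the Lagrangian, and solves the KKT stationarity condition to obtain (\ref{eq16}). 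You instead whiten with $\mathbf{v}=\mathbf{G}^{1/2}\mathbf{u}$ and apply Cauchy--Schwarz, which identifies $\mathbf{u}\propto\mathbf{G}^{-1}\mathbf{a}_{r}(\theta_{0})$ together with the equality condition. Your route buys several things the paper's does not make explicit: it is a direct certificate of \emph{global} optimality (the paper's KKT argument yields only a stationary point unless one also notes that problem (\ref{38}) is convex because $\mathbf{G}\succ\mathbf{0}$, and it sidesteps the minor subtlety of differentiating a Lagrangian with a complex-valued equality constraint); it verifies invertibility of $\mathbf{G}$ via the $\sigma_{r}^{2}\mathbf{I}_{N_{r}}$ term, which the paper uses implicitly; it handles the degenerate case $\mathbf{a}_{t}^{H}(\theta_{0})\mathbf{F}\mathbf{a}_{t}(\theta_{0})=0$; and it produces the maximum value $\mathbf{a}_{r}^{H}(\theta_{0})\mathbf{G}^{-1}\mathbf{a}_{r}(\theta_{0})$ as a by-product, from which the substituted SINR expression (\ref{15}) follows immediately. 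What the paper's approach buys in exchange is the explicit link to the MVDR beamformer, which motivates the specific normalization $\mathbf{u}^{H}\mathbf{a}_{r}(\theta_{0})=1$ appearing in (\ref{eq16}); in your argument that normalization is an arbitrary (though valid) choice among the optimal scalar multiples, so it is worth stating, as you do, that the quotient is scale invariant and hence the choice is immaterial.
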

\begin{proof}
See Appendix A.
\end{proof}
Substituting $\mathbf{u}^{*}$ into  (\ref{13b}) gives
 \begin{equation}\label{15}
\begin{aligned}
\mathrm{SINR}_{r}^{\ast} = |\alpha_{0}|^{2}\mathbf{a}_{t}^{H}\left(\theta_{0}\right)\mathbf{F}\mathbf{a}_{t}\left(\theta_{0}\right)\mathbf{a}_{r}^{H}\left(\theta_{0}\right)\mathbf{G}^{-1}\mathbf{a}_{r}\left(\theta_{0}\right).
\end{aligned}
\end{equation}

\subsubsection{Optimization of $\mathbf{w}_{k}$ and $p_m$}
Replacing (\ref{13b}) with $\mathrm{SINR}_{r}^{*}\geq \gamma_{r}$
and introducing an auxiliary variable
\begin{equation}\label{17-1}
  \mathbf{W}_{k}=\mathbf{w}_{k}\mathbf{w}_{k}^{H},
\end{equation}
Problem (\ref{eq13}) can be reformulated as

\begin{subequations}\label{eq19}
\begin{align}
&\underset{\mathbf{W}_{k},p_{m}}
{\text{min}}-\sum_{k=1}^{K}\mathrm{log}_{2}\left(1+\mathrm{SINR}_k^\text{CU}\right)-\sum_{m=1}^{M}\log_{2}\left(1+\mathrm{SINR}_m^\text{D2D}\right)\label{19a}\\
&~\text{s.t.}~~|\alpha_{0}|^{2}\mathbf{a}_{t}^{H}\left(\theta_{0}\right)\mathbf{F}\mathbf{a}_{t}\left(\theta_{0}\right)\mathbf{a}_{r}^{H}\left(\theta_{0}\right)\mathbf{G}^{-1}\mathbf{a}_{r}\left(\theta_{0}\right)
\geq \gamma_{r} \label{19b}\\
&~~~~~~~\sum_{k=1}^{K}\text{Tr}\left(\mathbf{W}_{k}\right)\leq P_{\mathrm{BS}}\label{19c}\\
&~~~~~~~0\leq p_{m}\leq P_{m}, \forall m\label{19d}\\
&~~~~~~~\mathbf{W}_{k}\succeq \mathbf{0}, \forall k\label{19f}\\
&~~~~~~~\mathrm{rank}(\mathbf{W}_{k})=1, \forall k. \label{19g}
\end{align}
\end{subequations}
where $\mathbf{F}$, $\mathrm{SINR}_{k}^\text{CU}$ and $\mathrm{SINR}_{m}^\text{D2D}$ can be rewritten as the functions of $\mathbf{W}_{k}$ rather than $\mathbf{w}_{k}$. In other words,
\begin{equation}\label{eq20}
\mathbf{F} = \sum_{k=1}^{K}\mathbf{W}_{k},
\end{equation}
\begin{equation}\label{eq21}
\begin{aligned}
&\mathrm{SINR}_{k}^\text{CU}=\frac{\mathbf{h}_{\text{BS}, k}^H\mathbf{W}_{k}\mathbf{h}_{\text{BS}, k}}{\sum\limits_{\forall k^{\prime}\neq k}\mathbf{h}_{\text{BS},k}^H\mathbf{W}_{k'}\mathbf{h}_{\text{BS}, k}+\sum\limits_{m=1}^Mp_m\left|h_{m, k}^{\text{TX}}\right|^2+\sigma_{k}^2},
\end{aligned}
\end{equation}
\begin{equation}\label{eq22}
\begin{aligned}
&\mathrm{SINR}_{m}^\text{D2D}=\frac{p_m\left|h_{m, m}^{\text{TX}}\right|^2}{\mathbf{h}_{\text{BS}, m}^H\mathbf{F}\mathbf{h}_{\text{BS}, m}+\sum\limits_{\forall m^{\prime}\neq m}p_{m'}\left|h_{m', m}^{\text{TX}}\right|^2+\sigma_{m}^2}.
\end{aligned}
\end{equation}
Problem (\ref{eq19}) is still a non-convex problem due to the object function, the constraint (\ref{19b}), and the rank one constraint (\ref{19g}).

We rewrite (\ref{19b}) as
 \begin{equation}\label{eq23}
\begin{aligned}
\frac{\gamma_{r}}{|\alpha_{0}|^{2}}\left(\mathbf{a}_{t}^{H}\left(\theta_{0}\right)\mathbf{F}\mathbf{a}_{t}\left(\theta_{0}\right)\right)^{-1}-\mathbf{a}_{r}^{H}\left(\theta_{0}\right)\mathbf{G}^{-1}\mathbf{a}_{r}\left(\theta_{0}\right)\leq 0.
\end{aligned}
\end{equation}
Let
\begin{equation}\label{eq24}
  \varphi=\mathbf{a}_{t}^{H}\left(\theta_{0}\right)\mathbf{F}\mathbf{a}_{t}\left(\theta_{0}\right).
\end{equation}
and we can obtain that $\varphi\geq 0$ since $\mathbf{F}\succeq \mathbf{0}$.
Moreover, it can also be obtained that  $\varphi$ is a linear function concerning $\mathbf{W}_{k}$.
Therefore, $\left(\mathbf{a}_{t}^{H}\left(\theta_{0}\right)\mathbf{F}\mathbf{a}_{t}\left(\theta_{0}\right)\right)^{-1}$  is convex with respect to  $\mathbf{W}_{k}$.
Similarly, we can also obtain that $\mathbf{a}_{r}^{H}\left(\theta_{0}\right)\mathbf{G}^{-1}\mathbf{a}_{r}\left(\theta_{0}\right)$ in (\ref{eq23}) is a convex function with respect to $\mathbf{W}_{k}$ and $p_{m}$ \cite{Boyd2004}.
Thus, (\ref{eq23}) is a difference-of-convex (DC) constraint.

We employ  SCA  to convert the DC constraint (\ref{eq23}) into a convex constraint.
Let
\begin{equation}\label{eq25}
  f(\mathbf{G}) = \mathbf{a}_{r}^{H}\left(\theta_{0}\right)\mathbf{G}^{-1}\mathbf{a}_{r}\left(\theta_{0}\right)
\end{equation}
and $\mathbf{G}^{(t-1)}$ be the solution at the $(t-1)$th iteration of the SCA.
The first-order approximation of Taylor expansion about $f(\mathbf{G})$ at around $\mathbf{G}^{(t-1)}$ is given by
\begin{equation}\label{eq26}
\begin{aligned}
&f(\mathbf{G}) \approx \widetilde{f}(\mathbf{G})=\mathbf{a}_{r}^{H}\left(\theta_{0}\right)\left(\mathbf{G}^{^{(t-1)}}\right)^{-1}\mathbf{a}_{r}\left(\theta_{0}\right)\\
&-\mathbf{a}_{r}^{H}\left(\theta_{0}\right)\left(\mathbf{G}^{^{(t-1)}}\right)^{-1}\left(\mathbf{G}-\mathbf{G}^{(t-1)}\right)\left(\mathbf{G}^{^{(t-1)}}\right)^{-1}\mathbf{a}_{r}\left(\theta_{0}\right)
\end{aligned}
\end{equation}
which is a linear function with respect to  $\mathbf{W}_{k}$ and $p_{m}$.
Hence,
\begin{equation}\label{eqn1}
  \frac{\gamma_{r}}{|\alpha_{0}|^{2}}\left(\mathbf{a}_{t}^{H}\left(\theta_{0}\right)\mathbf{F}\mathbf{a}_{t}\left(\theta_{0}\right)\right)^{-1}-
\widetilde{f}(\mathbf{G})\leq 0
\end{equation}
is a convex constraint. Moreover,  we can obtain that
\begin{equation}\label{eq27}
\begin{aligned}
&\widetilde{f}(\mathbf{G})\leq f(\mathbf{G}).
\end{aligned}
\end{equation}
Thus, the feasible domain of (\ref{eqn1}) is a subset of the feasible domain of (\ref{eq23}).

Next, we deal with the objective function (\ref{19a}).
The expression of the communication rate of the $k$th CU $C_{k}^\text{CU}$ can be rewritten as
\begin{equation}\label{eq28}
\begin{aligned}
-C_{k}^\text{CU} &= -\mathrm{log}_{2}\left(1+\mathrm{SINR}_{k}^\text{CU}\right)\\
&=-\mathrm{log}_{2}\left(\mathbf{h}_{\text{BS},k}^H\mathbf{F}\mathbf{h}_{\text{BS}, k}+\sum_{m=1}^Mp_m\left|h_{m, k}^{\text{TX}}\right|^2+\sigma^2_{k}\right)\\
&+\underbrace{\mathrm{log}_{2}\left(\sum_{\forall k^{\prime}\neq k}\mathbf{h}_{\text{BS}, k}^H\mathbf{W}_{k'}\mathbf{h}_{\text{BS}, k}+\sum_{m=1}^Mp_m\left|h_{m, k}^{\text{TX}}\right|^2+\sigma^2_{k}\right)}_{E_{k}^\text{CU}}.
\end{aligned}
\end{equation}
The non-convexity of (\ref{eq28}) lies in the second concave term $E_{k}^\text{CU}$.
According to SCA, the first-order approximation of Taylor expansion about $E_{k}^\text{CU}$, i.e.,  $\widetilde{E}_{k}^\text{CU}(t)$ at the $t$th iteration is given in (\ref{eq29}) at the bottom of the next page.
It can be obtained that $\widetilde{E}_{k}^\text{CU}(t) \geq E_{k}^\text{CU}$.

Similarly, we can rewrite the expression of communication rate of the $m$th D2D pair  as
\setcounter{equation}{30}
 \begin{equation}\label{eq30}
\begin{aligned}
-C_{m}^\text{D2D}& = -\mathrm{log}_{2}(1+\mathrm{SINR}_{m}^\text{D2D})\\
&=-\log_{2}\left(\mathbf{h}_{\text{BS},m}^H\mathbf{F}\mathbf{h}_{\text{BS},m}+\sum_{m'=1}^Mp_{m'}\left|h_{m', m}^{\text{TX}}\right|^2+\sigma^2_m\right)\\
&+\underbrace{\log_{2}\left(\mathbf{h}_{\text{BS}, m}^H\mathbf{F}\mathbf{h}_{\text{BS}, m}+\sum_{m^{\prime}\neq m}p_{m'}\left|h_{m', m}^{\text{TX}}\right|^2+\sigma^2_m\right)}_{E_{m}^\text{D2D}}
\end{aligned}
\end{equation}
and the non-convex term $E_{m}^\text{D2D}$ at the $t$th iteration is approximated with  the first-order Taylor expansion as in (\ref{eq31}), given at the bottom of the next page. It can also be obtained that $\widetilde{E}_{m}^\text{D2D}(t) \geq E_{m}^\text{D2D}$.

\newcounter{MYtempeqncnt2}
\setcounter{MYtempeqncnt2}{\value{equation}}
\setcounter{equation}{29}
\begin{figure*}[!hb]
\hrulefill
\normalsize
\begin{equation}\label{eq29}
 \begin{aligned}
\widetilde{E}_{k}^\text{CU}(t)&=\log_{2}\left(\sum_{k^{\prime}\neq k}\mathbf{h}_{\text{BS}, k}^H\mathbf{W}_{k'}^{(t-1)}\mathbf{h}_{\text{BS}, k}+\sum_{m=1}^Mp_m^{(t-1)}\left|h_{m, k}^{\text{TX}}\right|^2+\sigma^2\right)\\
&~~~~~~+\frac{\sum\limits_{k'\neq k}\text{Tr}\Big(\mathbf{h}_{\text{BS}, k}\mathbf{h}_{\text{BS}, k}^{H}\Big(\mathbf{W}_{k'}-\mathbf{W}_{k'}^{(t-1)}\Big)\Big)+\sum\limits_{m=1}^{M}\left|{h}_{m, k}^{\text{TX}}\right|^{2}\left({p}_{m}-{p}_{m}^{(t-1)}\right)}{\ln2\left(\sum\limits_{k^{\prime}\neq k}\mathbf{h}_{\text{BS}, k}^{H}\mathbf{W}_{k'}^{(t-1)}\mathbf{h}_{\text{BS}, k}+\sum\limits_{m=1}^{M}p_{m}^{(t-1)}\left|h_{m, k}^{\text{TX}}\right|^{2}+\sigma^{2}\right)}
\end{aligned}
\end{equation}
\setcounter{equation}{\value{MYtempeqncnt2}}
\end{figure*}
\newcounter{MYtempeqncnt3}
\setcounter{MYtempeqncnt3}{\value{equation}}
\setcounter{equation}{31}
\begin{figure*}[!hb]
\normalsize
\begin{equation}\label{eq31}
 \begin{aligned}
\widetilde{E}_{m}^\text{D2D}(t)&=\log_{2}\left(\sum_{k=1}^K\mathbf{h}_{\text{BS}, m}^H\mathbf{W}_{k}^{(t-1)}\mathbf{h}_{\text{BS}, m}+\sum_{m^{\prime}\neq m}p_{m'}^{(t-1)}\left|h_{m', m}^{\text{TX}}\right|^2+\sigma^2\right)\\
&~~~~~~+\frac{\sum\limits_{k=1}^{K}\text{Tr}\Big(\mathbf{h}_{\text{BS}, m}\mathbf{h}_{\text{BS}, m}^{H}\Big(\mathbf{W}_{k}-\mathbf{W}_{k}^{(t-1)}\Big)\Big)+\sum\limits_{m^{\prime}\neq m}\left|{h}_{m', m}^{\text{TX}}\right|^{2}\left({p}_{m'}-{p}_{m'}^{(t-1)}\right)}{\ln2\left(\sum\limits_{k=1}^{K}\mathbf{h}_{\text{BS}, m}^{H}\mathbf{W}_{k}^{(t-1)}\mathbf{h}_{\text{BS}, m}+\sum\limits_{m^{\prime}\neq m}p_{m'}^{(t-1)}\left|h_{m', m}^{\text{TX}}\right|^{2}+\sigma^{2}\right)}
\end{aligned}
\end{equation}
\setcounter{equation}{\value{MYtempeqncnt3}}
\end{figure*}

Therefore, we have the upper bounds $-\widetilde{C}_{k}^\text{CU}(t)$ and $-\widetilde{C}_{m}^\text{D2D}(t)$ for $-C_{k}^\text{CU}$ and $-C_{m}^\text{D2D}$, respectively,
where
\setcounter{equation}{32}
\begin{equation}\label{eq32}
\begin{aligned}
-\widetilde{C}_{k}^\text{CU}(t)=& -\mathrm{log}_{2}\left(\mathbf{h}_{\text{BS},k}^H\mathbf{F}\mathbf{h}_{\text{BS},k}+\sum_{m=1}^Mp_m\left|h_{m, k}^{\text{TX}}\right|^2+\sigma^2\right)\\
&~~~~+\widetilde{E}_{k}^\text{CU}(t).
\end{aligned}
\end{equation}
and
\begin{equation}\label{eq33}
\begin{aligned}
-\widetilde{C}_{m}^\text{D2D}(t)= &-\log_{2}\left(\mathbf{h}_{\text{BS}, m}^H\mathbf{F}\mathbf{h}_{\text{BS},m}+\sum_{m'=1}^Mp_{m'}\left|h_{m', m}^{\text{TX}}\right|^2+\sigma^2\right)\\
&~~~~+\widetilde{E}_{m}^\text{D2D}(t).
\end{aligned}
\end{equation}
After omitting the rank-one constraint (\ref{19g}), we have the SCA-based optimization problem at the $t$th iteration as
\begin{subequations}\label{eq34}
\begin{align}
\underset{\mathbf{W}_{k},p_{m}}
{\text{min}}&-\sum_{k=1}^{K} \widetilde{C}_{k}^\text{CU}(t) - \sum_{m=1}^{M} \widetilde{C}_{m}^\text{D2D}(t)\label{31a}\\
~\text{s.t.}~~~~&\frac{\gamma_{r}}{|\alpha_{0}|^{2}}\left(\mathbf{a}_{t}^{H}\left(\theta_{0}\right)\mathbf{F}\mathbf{a}_{t}\left(\theta_{0}\right)\right)^{-1}-
\widetilde{f}(\mathbf{G})\leq 0\label{31b}\\
&\sum_{k=1}^{K}\text{Tr}\left(\mathbf{W}_{k}\right)\leq P_{\mathrm{BS}}\label{31c}\\
&0\leq p_{m}\leq P_{m}, \forall m.\label{31d}\\
&\mathbf{W}_{k}\succeq \mathbf{0}, \forall k
\end{align}
\end{subequations}
which is a convex problem, and can be solved with convex optimization tools such as CVX.
The beamforming vector $\mathbf{w}_{k}$ can be obtained through
$\mathbf{W}_{k}$ with eigenvalue decomposition (EVD) or the Gaussian randomization method \cite{Luo2010}.

\subsection{Analysis of Convergence and Complexity}
Firstly, we analyze the convergence of the SCA-based optimization of (\ref{eq34}).
Denote the objective function of (\ref{eq34}) at the $t$th iteration by $opt_{1}\left(\mathbf{W}_{k}^{(t)},p_{m}^{(t)}\right)$,
where $\left\{\mathbf{W}_{k}^{(t)},p_{m}^{(t)}\right\}$ denotes the optimal solution at the $t$th iteration.
In SCA, the optimal solution $\left\{\mathbf{W}_{k}^{(t-1)},p_{m}^{(t-1)}\right\}$ at the $(t-1)$th iteration is also a feasible solution at the $t$th iteration \cite{Beck2010}.
Therefore, it can be obtained that $opt_{1}\left(\mathbf{W}_{k}^{(t)},p_{m}^{(t)}\right)\leq opt_{1}\left(\mathbf{W}_{k}^{(t-1)},p_{m}^{(t-1)}\right)$.
Moreover, the objective function of (\ref{eq34}) has a lower bound, leading to the convergence of the SCA-based optimization.

Now we analyse the complexity. According to \cite{Qi2023}, the worst-case complexity for solving a mix semidefinite and second-order cone programming problem is

\begin{equation}\label{eq35}
\begin{split}
\sqrt{\mu}\left(l\sum_{i=1}^{N_{\mathrm{soc}}}
\left(n_{i}^{\mathrm{SOC}}\right)^2+l^2\sum_{i=1}^{N_{\mathrm{sd}}}\left(n_{i}^{\mathrm{sd}}\right)^2
+l\sum_{i=1}^{N_{\mathrm{sd}}}\left(n_{i}^{\mathrm{sd}}\right)^3+l^3\right)\\
\times \ln{1/\epsilon}
\end{split}
\end{equation}
where $l$ is the number of variables, $N_{\mathrm{soc}}$ and $N_{\mathrm{sd}}$ are the number of second-order cone and semidefinite constraints, and $n_{i}^{\mathrm{SOC}}$ and $n_{i}^{\mathrm{sd}}$ are the dimensions of the $i$th second-order and semidefinite cones, $\mu=\sum_{i=1}^{N_{\mathrm{sd}}}n_{i}^{\mathrm{sd}}+2N_{\mathrm{soc}}$ is the barrier parameter, and $\epsilon>0$ is the solution precision.

The main complexity of each iteration lies in solving a semidefinite programming problem (\ref{eq34}). For (\ref{eq34}),
there are total $l=KN_t^2+M$ variables, $2M+1$ semidefinite constraints of size 1,
$K$ semidefinite constraints of size $N_t$, and $1$ semidefinite constraint of size 2.
Considering $I_{0}$ iterations, the complexity of the SCA-based optimization for (\ref{eq34}) is
$\mathcal{O}\bigg(I_{0}\sqrt{(2M+3+KN_{t})}\big[(KN_{t}^2+M)^2(5+2M+KN_{t}^2)+(KN_{t}^2+M)(9+2M+KN_{t}^3)+(KN_{t}^2+M)^3\big]\bigg)$ with $I_{0}$ iterations.

\section{Numerical Results}
\label{sec6}

In this section, we provide extensive numerical results to validate the effectiveness of the proposed ISAC design in a D2D underlaid cellular network.
In the simulations, the numbers of transmit antennas and receive antennas at the SB are set to $N_t=N_r =32$.
A target of interest is located at $\theta_{0} = 0^\circ$, and two clutters are located at $\theta_{1} = -50^\circ$ and $\theta_{2} = 40^\circ$.
The noise power at CUs, D2D receivers, and the BS are set to $\sigma_k^2 = \sigma_m^2 = \sigma_r^2 = -90\mathrm{dBm}$ \cite{Hua2023},
and the channel power gains of target and clutters are set to $|\alpha_0|^2=10\sigma_r^2$ and $|\alpha_1|^2=|\alpha_2|^2=10^3\sigma_r^2$ \cite{Tsinos2021}, respectively.
The $(i,j)$th element in residual SI channel coefficient matrix $(\mathbf{H}_\mathrm{SI})_{i,j}=\sqrt{\beta_{i,j}}\exp^{-j2\pi d_{i,j}/\lambda}$,
where $\beta_{i,j}$ denotes the residual SI channel power, $d_{i,j}$ denotes the  distance between the $i$th transmit antenna and the $j$th receive antenna. We let $\beta_{i,j}=-130\mathrm{dB}$, and model the $\exp^{-j2\pi d_{i,j}/\lambda}$ as a random variable with unit power and random phase \cite{Temiz2022}.
Additionally, we establish a two-dimensional Cartesian coordinates with the BS at the origin point,
and we use a tuple $(\theta, d)$ to represent the direction and distance information of CUs, D2D-TXs, and D2D-RXs, where $\theta$ denotes the angle  and $d$ denotes the distance between the users and the BS.
Assume that the BS serves $K=2$ CUs located at $(-75^\circ, 50\text{m})$ and $(20^\circ, 60\text{m})$.
Let $M=2$ pairs $\big\{\{$D2D-TX$_{m}$,D2D-RX$_{m}\}\big\}_{m=1,2}$ where D2D-RX$_1$ is randomly deployed within the range of  $\theta \in \left[-40^\circ,-30^\circ\right], d \in [70\text{m},80\text{m}]$ and D2D-RX$_2$ within $\theta \in \left[65^\circ,75^\circ \right], d \in \left[70\text{m},80\text{m}\right]$.
The distance between D2D-RX$_m$ and D2D-TX$_m$ is $20$m.
Fig. \ref{location} shows the visualization of the system geometry.

According to \cite{He2023} and \cite{Park2021}, the channel coefficients are given as follows:
$\mathbf{h}_{\text{BS}, k}=\sqrt{\varepsilon_0 d_{\text{BS},k}^{-\nu}}\sqrt{N_t}\mathbf{a}_t(\theta_k)$,
$\mathbf{h}_{\text{BS},m}=\sqrt{\varepsilon_0 d_{BS,m}^{-\nu}}\sqrt{N_t}\mathbf{a}_t(\theta_m)$,
$\mathbf{h}_{m,\text{BS}}^{\text{TX}}=\sqrt{\varepsilon_0 d_{m,\text{BS}}^{-\nu}}\sqrt{N_r}\mathbf{a}_r(\theta_m)$, $h_{m, m}^{\text{TX}}=\sqrt{\varepsilon_0 d_{m,m}^{-\nu}}$,
$h_{m',m}^{\text{TX}}=\sqrt{\varepsilon_0 d_{m', m}^{-\nu}}$,
$h_{m,k}^{\text{TX}}=\sqrt{\varepsilon_0 d_{m,k}^{-\nu}}$,
where $d_{\text{BS},k}$ denotes the distance between the BS and the $k$th CU,
$d_{m, \text{BS}}$ is the distance between D2D-TX$_{m}$ and the BS,
$d_{m', m}$ is the distance between $\{$D2D-TX$_{m'}\}_{m'\neq m}$ and D2D-RX$_{m}$,
$d_{m, k}$ is the distance between the D2D-TX$_{m}$ and the $k$th CU,
$\varepsilon_0$ denotes the path loss at a reference distance of one meter and is set to $\varepsilon_0=-30\mathrm{dB}$,
and $\nu$ denotes path loss exponent and set to $\nu=3$.

We compare the proposed ISAC scheme with four benchmark schemes: Communication-only scheme,  sensing-only scheme, the maximal ratio transmission (MRT) scheme\cite{Wu201911} and the zero forcing (ZF) scheme\cite{Lin2024}.

\textit{1) Communication-only scheme:} This scheme concentrates on designing a communication-oriented system without taking radar sensing into consideration. Accordingly, a communication-oriented optimization problem is constructed for comparing the ISAC scheme.
Specifically, the communication-only optimization problems is formulated by removing the radar SINR constraint while maintaining the same objective functions as those in (\ref{eq19}).
This benchmark scheme aids in evaluating the impact of integrating radar sensing functionality.

\textit{2) Sensing-only scheme:} In contrast to the communication-only scheme, the sensing-only scheme emphasizes a sensing-oriented system design and disregards the communication functionality.
 For formulating the sensing-only optimization problem compared with (\ref{eq19}), the radar SINR maximization problem is considered with constraints on the transmit power of the BS and D2D-TXs, which is given by
\begin{subequations}\label{eq36}
\begin{align}
\underset{\mathbf{w}_{k},\mathbf{u},p_{m}}
{\mathrm{max}}&\frac{\mathbf{u}^{H}\mathbf{A}(\theta_{0})\mathbf{F}\mathbf{A}^{H}(\theta_{0})\mathbf{u}}{\mathbf{u}^{H}\left(\mathbf{B}\mathbf{F}\mathbf{B}^{H}+\sigma_{r}^{2}\mathbf{I}_{N_{r}}\right)\mathbf{u}}\label{33a}\\
\text{s.t.}~~~~~&\sum_{k=1}^{K}\|\mathbf{w}_{k}\|^{2}\leq P_{\mathrm{BS}};\label{33c}\\
&0\leq p_{m}\leq P_{m}, \forall m;\label{33d}
\end{align}
\end{subequations}
The corresponding solution can be referred to \cite{Wu2018} or the fractional programming (FP) \cite{Su2022}.
This benchmark scheme can help us evaluate the impact of integrating communication functionality.

\textit{3) MRT scheme:} The MRT scheme aims to maximize the desired signal strength of the CUs, where the beamforming vector $\mathbf{w}_{k}$ is given by
\begin{equation}\label{eq37}
 \mathbf{w}_{k} = \sqrt{P_{k}}\dfrac{\mathbf{h}_{\text{BS},k}}{\|\mathbf{h}_{\text{BS},k}\|_2}.
\end{equation}

\textit{4) ZF scheme:} In ZF scheme, the beamforming vector $\mathbf{w}_k$ is forced to lie in the null space of channels  $\{\mathbf{h}_{\text{BS},m}\}_{m=1}^{M}$ and $\{\mathbf{h}_{\text{BS},k}\}_{k'\neq k}$. The beamforming vector $\mathbf{w}_{k}$ is given by
\begin{equation}\label{eq38}
 \mathbf{w}_{k} = \sqrt{P_{k}}\dfrac{\mathcal{P}_{\mathbf{H}_{k}}^{\perp}\mathbf{h}_{\text{BS},k}}{\|\mathcal{P}_{\mathbf{H}_{k}}^{\perp}\mathbf{h}_{\text{BS},k}\|_2}
\end{equation}
 where $\mathcal{P}_{\mathbf{H}_{k}}^{\perp}=\mathbf{I}_{N_t}-\mathbf{H}_{k}(\mathbf{H}_{k}^{H}\mathbf{H}_{k})^{-1}\mathbf{H}_{k}^{H}$ denotes the orthogonal projector of $\mathbf{H}_{k}$, with $\mathbf{H}_{k}\in\mathbb{C}^{N_t\times (M+K-1)}$ containing the channel vectors $\{\mathbf{h}_{\text{BS},m}\}_{m=1}^{M}$ and $\{\mathbf{h}_{\text{BS},k}\}_{k'\neq k}$ stacked in the columns.

 \begin{figure}[!t]
\centerline{\includegraphics[width=2.8in]{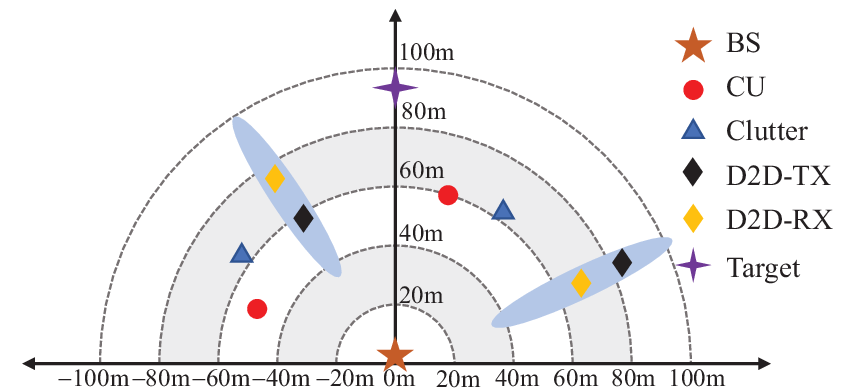}}
\caption{System geometry.}
\label{location}
\end{figure}


\begin{table}[!t]
	\centering
\caption{Ratio of Rank-one Solutions of $\mathbf{W}_k$, $k=1,2$}
	\begin{tabular}{lp{2cm}p{1cm}}

		\toprule  
		   &$\mathbf{W}_{1}$ & $\mathbf{W}_{2}$\\
		\midrule  
		\textbf{Proposed scheme}    & 99.2\% & 99.7\%\\
		\bottomrule 
	\end{tabular}
       \label{rankone}
\end{table}
The rank one constraints on $\mathbf{W}_k$ have been omitted in the SCA-based iterations,
and we evaluate the ratio of rank-one solutions of $\mathbf{W}_k$ in the iterations.
In the simulations, the radar SINR takes value randomly in a range, i.e., $\gamma_r\in [10, 20]\text{dB}$,
the maximum transmit power of D2D-TXs  $P_m\in [0.7, 20]\text{dBm}$,
and the maximum transmit power of the BS is also randomly selected, i.e., $P_\text{BS}\in [20, 120]\text{dBm}$.
We regard $\mathbf{W}_k$ as a rank one solution if its eigenvalues satisfy $\lambda_{\mathrm{lar}}/\lambda_{\mathrm{sec}}>10^5$, where $\lambda_{\mathrm{lar}}$ and $\lambda_{\mathrm{sec}}$ denote the largest eigenvalue and the second largest eigenvalue of $\mathbf{W}_k$, respectively.
We collect all $\mathbf{W}_k$s ($k = 1,2$) in the iterations with 2 CUs and 2 D2D pairs.
Table \ref{rankone} presents the ratios of rank-one solutions in the iterations in solving (\ref{eq34}).
It shows that the rank-one solution ratio of  $\mathbf{W}_k$  exceeds $99.2\%$.

\begin{figure}[!t]
\centerline{\includegraphics[width=2.8in]{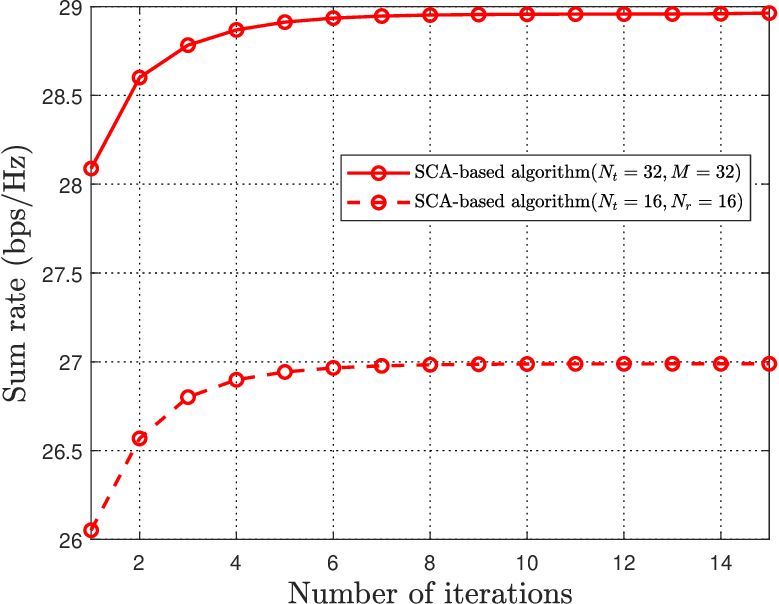}}
\caption{Convergence performance of the proposed scheme.}
\label{tc_diffiter}
\end{figure}

Fig. \ref{tc_diffiter} evaluates the convergence of the SCA-based algorithm for the proposed scheme with 2 CUs and 2 D2D pairs.
Both $16$ ($N_t =16$, $N_r = 16$) and $32$ ($N_t =32$, $N_r = 32$) antennas at the BS are considered.
It shows that the SCA-based algorithm converges after only $5$ iterations.

 \begin{figure}[!t]
\centerline{\includegraphics[width=2.8in]{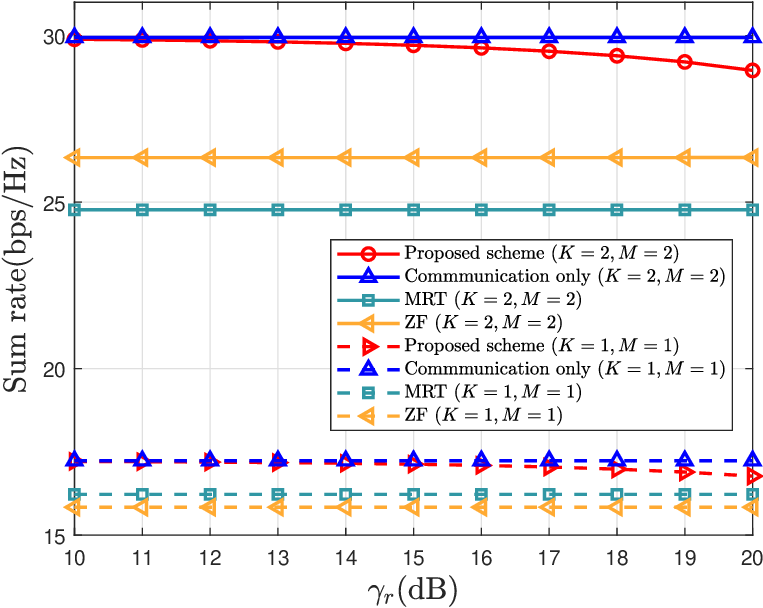}}
\caption{Sum rate at different SINR thresholds of radar sensing.}
\label{tc_diffrad}
\end{figure}




Next, we evaluate  sum rate and the SINR of radar sensing for various schemes.
Fig. \ref{tc_diffrad} shows the sum rate at different radar SINR thresholds $\gamma_r$ with $\gamma_m=\gamma_k=15\mathrm{dB}$.
It can be observed that integrating radar functionality into communication experiences slight performance loss compared with the communication-only scheme. Moreover, the proposed scheme significantly outperforms the communication oriented schemes MRT and ZF.
Specifically, when $\gamma_r=10$dB, The proposed scheme achieves almost the same sum rate as the communication only scheme, and the sum rate of the proposed scheme is $3.65$bps/Hz higher than ZF scheme when $K=2$ and $M=2$.
It is mainly because neither of the MRT and ZF schemes has the mechanism of suppressing interferences and strengthening the desired signal simultaneously.


 \begin{figure}[!t]
\centerline{\includegraphics[width=2.8in]{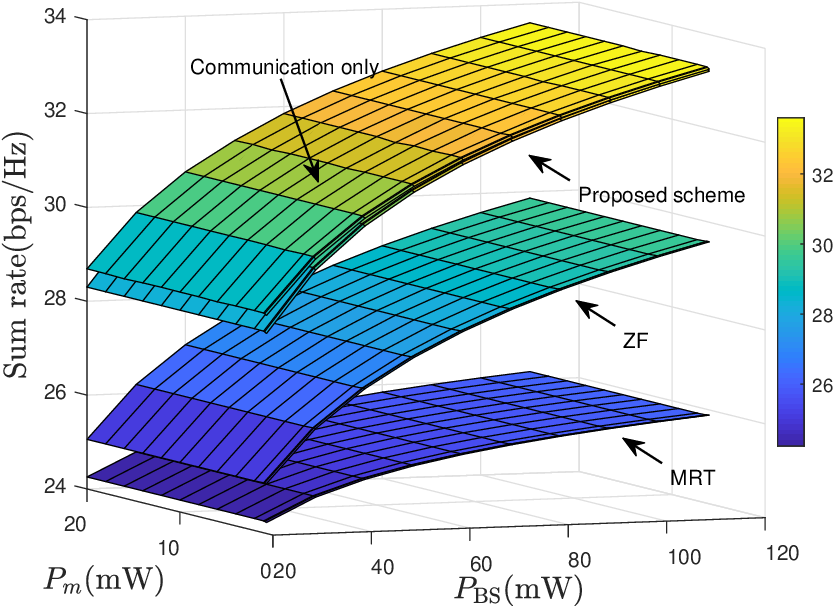}}
\caption{Sum rate versus maximum power constraints of BS  and D2D-TXs.}
\label{tctctc}
\end{figure}

 \begin{figure}[!t]
\centerline{\includegraphics[width=2.8in]{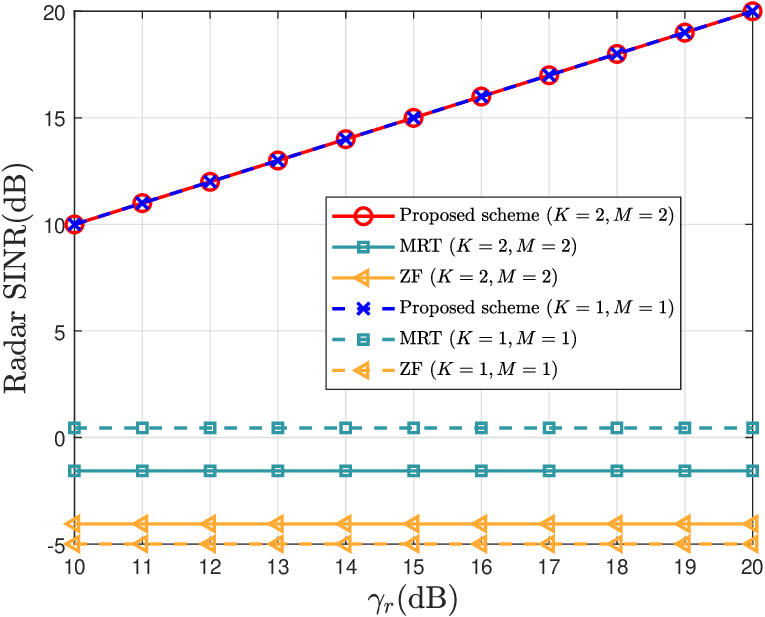}}
\caption{SINR for radar sensing of various schemes.}
\label{tc_radSINR}
\end{figure}

 \begin{figure}[!t]
\centerline{\includegraphics[width=2.8in]{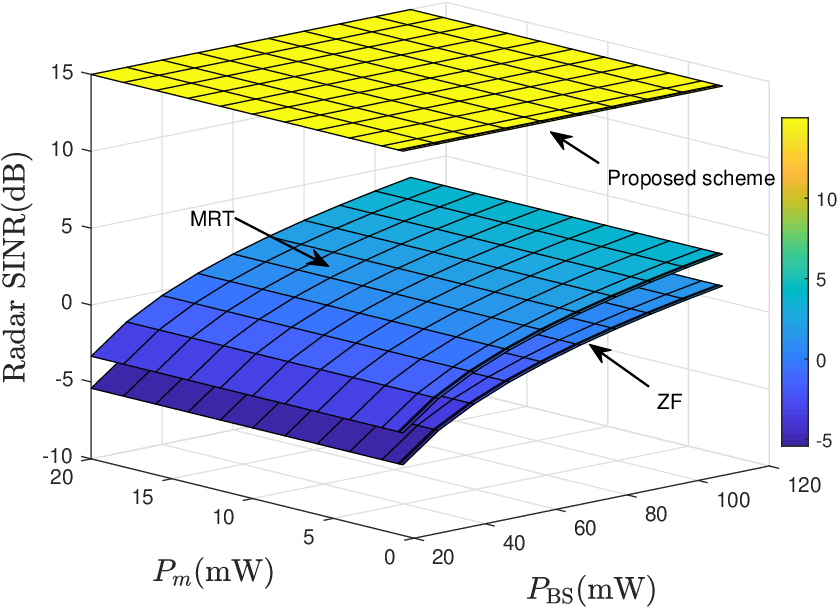}}
\caption{Radar SINR of various schemes at different SINR thresholds of D2D communication $\gamma_m$ and CU communication $\gamma_k$.}
\label{tc_diffpradSINR}
\end{figure}

Fig. \ref{tctctc} shows the sum rate at  different maximum power constraints of the BS $P_\mathrm{BS}$  and D2D transmitters  $P_m$.
It can be observed that the sum rate increases with $P_\mathrm{BS}$ and $P_m$,
and allocating more power to the BS leads to quicker increase of the sum rate than allocating power to D2D transmitters.
It can also be seen that the proposed scheme outperforms  the MRT and ZF schemes.
The minimal gap between the proposed scheme and ZF is above $3.2$bps/Hz, and the gaps between the proposed scheme and MRT are more pronounced.
Moreover, the gap between the proposed scheme and the communication-only scheme is small, implying that radar sensing can be effectively integrated into the communication system with slight performance loss in sum rate.

Fig. \ref{tc_radSINR} shows the  SINR of radar sensing for different schemes.
The radar SINR of the proposed scheme equals the required $\gamma_r$, while MRT and ZF have constant radar SINR at different $\gamma_r$.
This is because MRT and ZF employ fixed power allocation strategy.
It can be seen from Figs. \ref{tc_diffrad} and \ref{tc_radSINR} that our proposed scheme
outperforms MRT and ZF in both sum rate (communication) and radar SINR (sensing).
Fig. \ref{tc_diffpradSINR} shows the radar SINR at different maximum power constraints of the BS $P_\mathrm{BS}$  and D2D transmitters $P_m$.
Again, we can see that our proposed scheme achieves higher radar SINR than MRT and ZF.

\begin{figure*}[!t]
  \centering
  \includegraphics[width=7.0in]{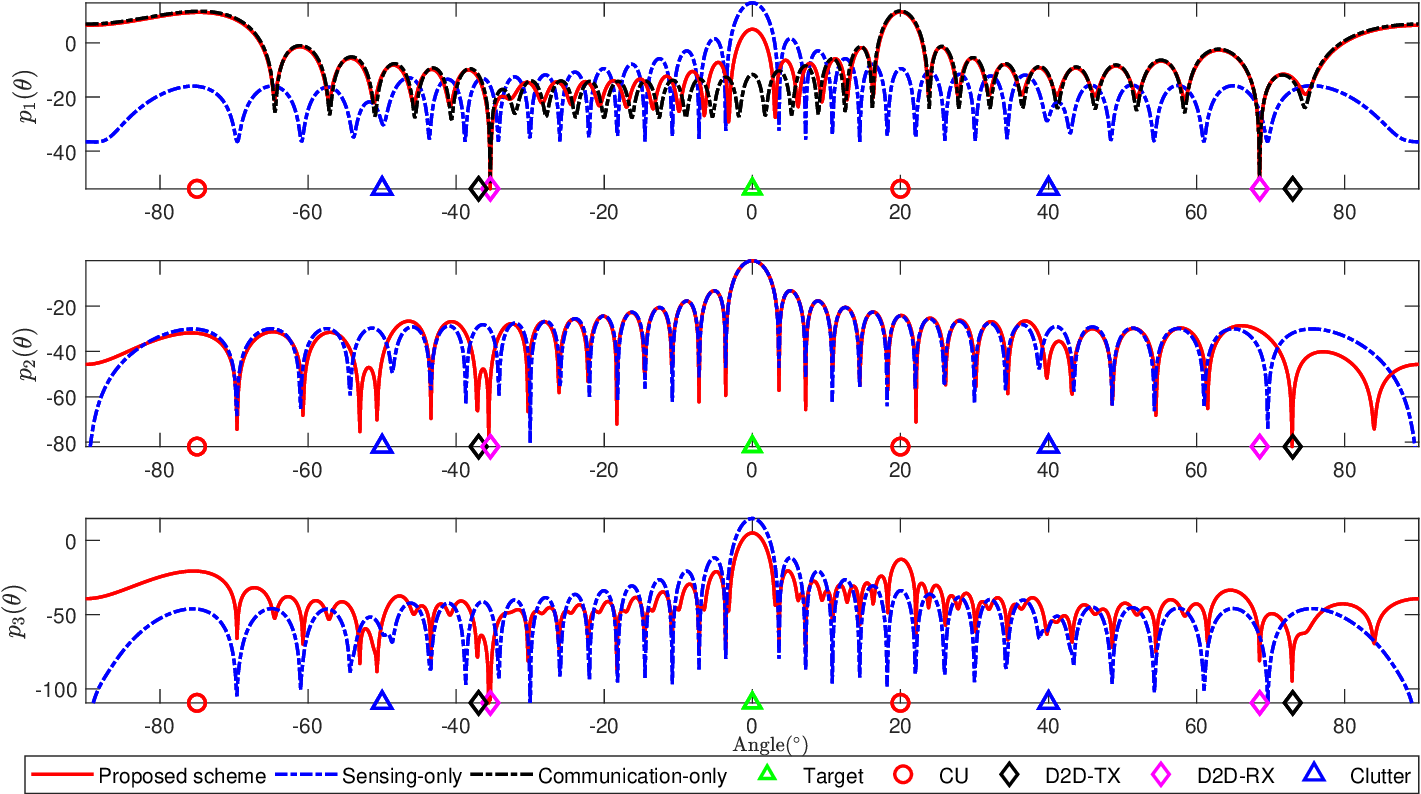}
  \caption{Transmit beampattern $p_1(\theta)$, receive beampattern $p_2(\theta)$ and combined beampattern $p_3(\theta)$ at the BS.}
  \label{beamforming2new}
\end{figure*}

Finally, we evaluate the beampatterns at the BS when $\gamma_r=15\mathrm{dB}$, $P_{\text{BS}}=30$mW, and $P_{m}=10$mW.
Fig. \ref{beamforming2new} shows the transmit beampattern $p_1(\theta)$, the receive beampattern $p_2(\theta)$, and the combined beampattern $p_3(\theta)$
 where
\begin{equation}\label{eq39}
p_1(\theta) = \mathbb{E}\left(\left|\mathbf{a}_t^H(\theta)\mathbf{x}\right|^2\right),
\end{equation}
\begin{equation}\label{eq40}
p_2(\theta) = \dfrac{\left|\mathbf{u}^{H}\mathbf{a}_r(\theta)\right|^2}{\mathbf{u}^{H}\mathbf{u}},
\end{equation}
\begin{equation}\label{eq41}
p_3(\theta) = \dfrac{\mathbb{E}\left(\left|\mathbf{u}^{H}\mathbf{a}_r(\theta)\mathbf{a}_t^H(\theta)\mathbf{x}\right|^2\right)}{\mathbf{u}^{H}\mathbf{u}}.
\end{equation}
It is observed from the transmit beampattern $p_1(\theta)$ in Fig. \ref{beamforming2new} that  the sensing-only scheme focuses the power on the target rather than the CUs, and  the communication-only scheme focuses on only the CUs, while the proposed scheme focuses on both the target and CUs.
Moreover, both the proposed scheme and the communication-only scheme have null steering beam at D2D-Rxs, reducing the interferences from the BS to the D2D communications.
However, the sensing-only scheme may interfere D2D communications significantly.
Next we examine the receive beampattern $p_2(\theta)$ in Fig. \ref{beamforming2new} of the proposed and sensing-only schemes.
Note that the communication-only scheme has no  receive beampattern.
It can be seen from the receive beampattern $p_2(\theta)$  that the proposed scheme has better suppressing null beam from the angles of clusters and the D2D-Txs than the sensing-only scheme.
The combined beampattern $p_3(\theta)$ shows that the proposed scheme can effectively suppress the interferences between the BS and D2D users, clusters and focus the power of target and CUs.


\section{Conclusion}
This work has investigated joint transceiver beamformer design and power allocation in an ISAC-enabled D2D underlaid cellular network, where the full-duplex BS performs the radar sensing and downlink CU communication, and D2D pairs engage in direct communication in the same time-frequency resource.
A sum rate maximization problem is formulated to improve the spectral efficiency. Accordingly,
an SCA-based algorithm is proposed to solve the optimization problem.
Extensive numerical results demonstrate that the proposed scheme can achieve satisfactory performance for radar sensing, CU and D2D communications, compared with the state-of-art schemes.

\appendix
\subsection{Proof of Lemma 1}
The numerator of (\ref{eq14}) can be equivalently transformed into $\mathbf{a}_{t}^{H}\mathbf{F}\mathbf{a}_{t}|\mathbf{u}^{H}\mathbf{a}_{r}|^{2}$,
and (\ref{eq14}) can be rewritten as
\begin{equation}\label{37}
\begin{aligned}
\underset{\mathbf{u}}
{\mathrm{max}}~~\frac{|\mathbf{u}^{H}\mathbf{a}_{r}(\theta_{0})|^{2}}{\mathbf{u}^{H}\mathbf{G}\mathbf{u}}\\
\end{aligned}
\end{equation}
As scaling $\mathbf{u}$ with an arbitrary constant will not change the objective function value, we place a constraint $\mathbf{u}^{H}\mathbf{a}_{r}=1$.
Therefore, problem (\ref{37}) can be recast as a minimum variance distortionless response (MVDR) problem as
\begin{equation}\label{38}
\begin{aligned}
\underset{\mathbf{u}}
{\mathrm{min}}~~&\mathbf{u}^{H}\mathbf{G}\mathbf{u},\\
\text{s.t.}~~~&\mathbf{u}^{H}\mathbf{a}_{r}(\theta_{0})=1.
\end{aligned}
\end{equation}
The Lagrangian function of problem (\ref{38}) is given by
 \begin{equation}\label{39}
\begin{aligned}
\mathcal{L}\left(\mathbf{u};\lambda\right)=\mathbf{u}^{H}\mathbf{G}\mathbf{u}+\lambda\left(\mathbf{u}^{H}\mathbf{a}_{r}(\theta_{0})-1\right)
\end{aligned}
\end{equation}
where $\lambda$ denotes the lagrange multiplier.
According to the Karush-Kuhn-Tucker (KKT) condition, the optimal solution should satisfy
$\frac{\partial \mathcal{L}}{\partial \mathbf{u}} = \mathbf{0}$ and $\mathbf{u}^{H}\mathbf{a}_{r}(\theta_{0})=1$.
Thus, $\mathbf{u}$ is given by
 \begin{equation}\label{40}
\begin{aligned}
\mathbf{u} = \frac{\mathbf{G}^{-1}\mathbf{a}_{r}(\theta_{0})}{\mathbf{a}_{r}^{H}(\theta_{0})\mathbf{G}^{-1}\mathbf{a}_{r}(\theta_{0})}.
\end{aligned}
\end{equation}

\normalem

\end{document}